\newcommand{\C}{{\mathbb C}}
\newcommand{\Z}{{\mathbb Z}}
\newcommand{\R}{{\mathbb R}}
\newcommand{\s}{{\mathbb S}}
\newcommand{\subarc}[1]{\widehat{#1}}
\newcommand{\sep}{\text{sep}}
\newtheorem{theorem}{Theorem}
\newtheorem{lemma}[theorem]{Lemma}
\newtheorem{corollary}[theorem]{Corollary}
\title[Polarization optimality of equally spaced points]{Polarization optimality of equally spaced points on the circle for discrete potentials} 
\author[D.P. Hardin]{Douglas P. Hardin}
\author[A.P. Kendall]{Amos P. Kendall}
\author[E.B. Saff]{Edward B. Saff}
\date{August 25, 2012}
\thanks{This research was supported, in part,
by the U. S. National Science Foundation under grants DMS-0808093 and DMS-1109266.  
}
\address{Center for Constructive Approximation\\
         Department of Mathematics\\
         Vanderbilt University\\
         1326 Stevenson Center\\
         Nashville, TN, 37240\\
         USA}
\email{doug.hardin@vanderbilt.edu}
\email{amos.p.kendall@vanderbilt.edu}
\email{edward.b.saff@vanderbilt.edu}
\keywords{Polarization, Chebyshev constants, roots of unity, potentials, max-min problems}
\subjclass[2000]{Primary 52A40,  30C15.}
\begin{document}

\begin{abstract}
We prove a conjecture of Ambrus, Ball and  Erd\'{e}lyi that equally spaced points
maximize the minimum of  discrete potentials on the unit circle whenever the potential is of the form 
\begin{equation*} 
\sum_{k=1}^n f(d(z,z_k)),
\end{equation*}
where $f:[0,\pi]\to [0,\infty]$ is non-increasing and convex and $d(z,w)$ denotes the geodesic distance  between $z$ and $w$ on the circle. 
\end{abstract}

\maketitle

\section{Introduction and Main Results}
  Let $\s^1:=\{z=x+iy \mid x,y \in \mathbb{R},\,\,x^2+y^2  =1\}$ denote the unit circle in the complex plane $\C$.  For $z,w\in\s^1$, we denote by $d(z,w)$ the geodesic (shortest arclength) distance between $z$ and $w$.   Let $f:[0,\pi]\to [0,\infty]$ be non-increasing and convex on $(0,\pi]$ with $f(0)=\lim_{\theta\to 0^+}f(\theta)$.   It then follows that $f$ is a continuous extended real-valued function on $[0,\pi]$.  
   For    a list of $n$ points (not necessarily distinct) $\omega_n=(z_1 ,\ldots, z_n)\in (\s^1)^n$, we consider the $f$-{\em potential} of $\omega_n$,
\begin{equation}\label{Udef}
U^f(\omega_n;z):=\sum_{k=1}^n f(d(z,z_k))  \qquad (z\in \s^1),
\end{equation}
 and the {\em $f$-polarization} of $\omega_n$,
\begin{equation}
M^f(\omega_n;\s^1):=\min_{z\in \s^1}U^f(\omega_n;z).
\end{equation}
In this note, we are chiefly concerned with the {\em $n$-point $f$-polarization} of $\s^1$ (also called the  $n$th {\em $f$-Chebyshev constant} of $\s^1$),
\begin{equation}
M^f_n(\s^1):=\sup_{\omega_n\in (\s^1)^n}M^f(\omega_n;\s^1),
\end{equation}
which has been the subject of several recent papers (e.g., \cite{A}, \cite{ABE}, \cite{ES}, \cite{NR}).

 In the case (relating to Euclidean distance) when
 \begin{equation}\label{fs}
 f(\theta)=f_s(\theta):=  |e^{i\theta}-1|^{-s}=(2 \sin |\theta/2|)^{-s}, \, s>0,
 \end{equation}
 we abbreviate the notation for the above quantities by writing
  \begin{equation}
  \begin{split}
 U^s(\omega_n;z)&:=\sum_{k=1}^n f_s(d(z,z_k))=\sum_{k=1}^n \frac{1}{|z-z_k|^s}, \\
 M^s(\omega_n;\s^1)&:=\min_{z\in \s^1} \sum_{k=1}^n \frac{1}{|z-z_k|^s},\\
 M^s_n(\s^1)&:=\sup_{\omega_n \in (\s^1)^n} M^s(\omega_n;\s^1).\\
 \end{split}
 \end{equation}

  The main result of this note is the following theorem  conjectured by G. Ambrus et al  \cite{ABE}.  Its proof is given in the next section. 

 \begin{theorem}\label{polarS1Thm} Let $f:[0,\pi]\to [0,\infty]$ be non-increasing and convex on $(0,\pi]$ with $f(0)=\lim_{\theta\to 0^+}f(\theta)$.  If   $\omega_n$ is any configuration of $n$ distinct equally spaced points on $\s^1$, then $M^f(\omega_n;\s^1)=M^f_n(\s^1)$.  Moreover,  if the convexity condition is replaced by strict convexity, then such configurations are the only ones that achieve this equality. 
  \end{theorem}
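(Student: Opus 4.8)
It suffices to prove the inequality $M^f(\omega_n;\s^1)\le M^f(\omega_n^*;\s^1)$ for \emph{every} $\omega_n\in(\s^1)^n$, where $\omega_n^*$ is a fixed configuration of $n$ equally spaced points, together with the evaluation of $M^f(\omega_n^*;\s^1)$; the theorem then follows since $M^f(\omega_n^*;\s^1)\le M^f_n(\s^1)=\sup M^f(\omega_n;\s^1)$. Writing $z=e^{i\phi}$, $z_k=e^{i\theta_k}$, we have $U^f(\omega_n;e^{i\phi})=\sum_{k=1}^n g(\phi-\theta_k)$, where $g$ is the $2\pi$-periodic extension of $\theta\mapsto f(|\theta|)$ on $[-\pi,\pi]$. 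The starting point is that, although $g$ has a peak at $0$, it is \emph{convex on the open interval} $(0,2\pi)$: on $(0,\pi]$ this is the hypothesis on $f$, on $[\pi,2\pi)$ it is the mirror image, and at $\pi$ the one-sided derivatives satisfy $g'_-(\pi)=f'_-(\pi)\le 0\le -f'_-(\pi)=g'_+(\pi)$.

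Consequently, if the points of $\omega_n$ are distinct and ordered $\theta_1<\cdots<\theta_n<\theta_1+2\pi$, then $\phi\mapsto U^f(\omega_n;e^{i\phi})$ is convex on each open arc $(\theta_k,\theta_{k+1})$: on such an arc no shift $\phi-\theta_j$ meets $0\bmod 2\pi$, so each summand is a restriction of $g$ to a subinterval of one period avoiding $0$. In particular the minimum of $U^f(\omega_n;\cdot)$ over $\s^1$ is attained in the interior of some arc, never at a point $z_k$. Applied to $\omega_n^*$: since $\omega_n^*$ is invariant under the reflection whose axis bisects one of its arcs, the convex function $U^f(\omega_n^*;\cdot)$ is, on that arc, symmetric about the midpoint $m$ and hence minimized there, so
\begin{equation*}
M^f(\omega_n^*;\s^1)=U^f(\omega_n^*;m)=\sum_{j}f\!\left(\frac{(2j-1)\pi}{n}\right)=:M^*,
\end{equation*}
the sum running over the multiset of the $n$ geodesic distances from $m$ to the points of $\omega_n^*$, namely $\tfrac{\pi}{n},\tfrac{\pi}{n},\tfrac{3\pi}{n},\tfrac{3\pi}{n},\dots$ (with a single $\pi$ when $n$ is odd). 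Let $s_1\le\cdots\le s_n$ list this multiset increasingly.

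The crux is the following comparison. \emph{Claim:} for every $\omega_n$ there is a point $\zeta\in\s^1$ such that, ordering the distances $d(\zeta,z_1),\dots,d(\zeta,z_n)$ increasingly as $d_{(1)}\le\cdots\le d_{(n)}$, one has $\sum_{k=1}^m d_{(k)}\ge\sum_{k=1}^m s_k$ for $m=1,\dots,n$, with strict inequality for some $m$ unless $\omega_n$ is equally spaced. Granting this, $d_{(1)}\ge s_1=\pi/n>0$, so all $d_{(k)}$ lie in $(0,\pi]$ where $f$ is finite and convex, and the classical weak‑majorization inequality of Hardy--Littlewood--P\'olya (which holds under exactly these partial‑sum hypotheses) gives $U^f(\omega_n;\zeta)=\sum_k f(d_{(k)})\le\sum_k f(s_k)=M^*$ for every non‑increasing convex $f$; hence $M^f(\omega_n;\s^1)\le M^*$ and $M^f_n(\s^1)=M^*$. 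When $f$ is strictly convex and $\omega_n$ is not equally spaced, the strict partial‑sum inequality forces $U^f(\omega_n;\zeta)<M^*$, so $M^f(\omega_n;\s^1)<M^*$ and such $\omega_n$ does not attain $M^f_n(\s^1)$; since equally spaced configurations do, by the previous paragraph, the uniqueness statement follows.

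The main obstacle is the Claim --- the construction of $\zeta$. The naive choice of the midpoint of a longest arc of $\omega_n$ does \emph{not} work: a small perturbation of $\omega_n^*$ already makes the later partial sums ($m$ near $n$) fail there, so $\zeta$ must be adapted to the gap structure. The $m=1$ condition forces $\zeta$ onto an arc of length $\ge 2\pi/n$ at geodesic distance $\ge\pi/n$ from all points (such an arc exists since the $n$ arc lengths sum to $2\pi$), and the remaining content is that among such $\zeta$ one can be chosen whose full sorted‑distance profile $(d_{(k)})$ --- obtained by accumulating arc lengths outward from $\zeta$ in the two directions --- dominates, partial sum by partial sum, the arithmetic‑progression profile $(s_k)$ of the equally spaced midpoint. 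I would establish this either by an induction starting from $\omega_n^*$ (where $\zeta=m$ and every inequality is an equality), lengthening the long arcs at the expense of the short ones while tracking the partial sums, or by a direct intermediate‑value argument placing $\zeta$ so that, for every radius $r$, the number of points of $\omega_n$ within geodesic distance $r$ of $\zeta$ is at most that for $\omega_n^*$ about $m$. The bookkeeping for the strict case, the case of coincident points, and the reduction of a general convex $f$ to the strictly convex one by approximation where convenient, are then routine.
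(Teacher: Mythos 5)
Your reduction of the theorem to the distance--majorization Claim is a genuinely different strategy from the paper's, and the surrounding steps (convexity of the potential on each open arc, the evaluation of $M^f(\omega_n^*;\s^1)$ at an arc midpoint, and the Hardy--Littlewood--P\'olya step converting the partial-sum inequalities into $\sum_k f(d_{(k)})\le\sum_k f(s_k)$ for non-increasing convex $f$) are all sound. But the Claim itself is the entire content of the theorem, and you have not proved it: you name it as ``the main obstacle'' and offer only two sketched strategies. That is a genuine gap, not a routine detail. Worse, one of the two strategies provably cannot work. Take $n=3$ and points at angles $0$, $2\pi/3+\delta$, $4\pi/3-\delta$ with $\delta>0$ small, so $(s_1,s_2,s_3)=(\pi/3,\pi/3,\pi)$. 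The condition $d_{(1)}\ge\pi/3$ confines $\zeta$ to the two tiny arcs $[\pi/3,\pi/3+\delta]$ and its mirror image; for every such $\zeta$ the largest distance is $d_{(3)}=4\pi/3-\delta-\zeta\le\pi-\delta<\pi=s_3$. Hence no admissible $\zeta$ satisfies the pointwise domination $d_{(k)}\ge s_k$ for all $k$ (equivalently, the counting-function bound ``at most as many points within every radius $r$'' that your intermediate-value strategy aims for). Only the weaker partial-sum version can survive, and in this example it survives at exactly one point per mirror image ($\zeta=\pi/3$, where $\sum_k d_{(k)}=5\pi/3$ with equality): the choice of $\zeta$ is completely rigid, so any proof of the Claim must be at least as delicate as the original problem. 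Your other sketch (induction from $\omega_n^*$, lengthening long arcs while tracking partial sums) is not developed enough to assess, but the rigidity above shows it cannot be ``routine bookkeeping.''

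For comparison, the paper avoids distance profiles altogether. It solves the circulant linear system $A\boldsymbol{\Delta}=\boldsymbol{\beta}$ (a discrete Laplacian) to produce a vector of nonnegative rotations $\boldsymbol{\Delta}^*$, with some $\Delta_j^*=0$, carrying $\omega_n$ to an equally spaced configuration; it then deforms along $t\mapsto\tau_{t\boldsymbol{\Delta}^*}(\omega_n)$ and shows, via a two-point convexity lemma (Lemma~\ref{DeltaLemma}), that the minimum of the potential over the distinguished shrinking arc $\subarc{z_j^t z_{j+1}^t}$ is non-decreasing in $t$. That monotone-deformation argument is what replaces your unproved Claim. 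If you want to salvage your approach, you would need to either prove the partial-sum Claim directly (the example above shows exactly how tight it is) or abandon the single-point comparison in favor of a dynamic one as in the paper.
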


Applying this theorem to the case of $f_s$ given in \eqref{fs} we immediately obtain the following.
\begin{corollary}\label{cor1} Let $s>0$ and $\omega_n^*:=\{ e^{i2\pi k/n}\,: k=1,2,\ldots,n\} $.  
If $(z_1,\ldots, z_n)\in(\s^1)^n$, then
\begin{equation}
\min_{z\in \s^1}\sum_{k=1}^n \frac{1}{|z-z_k|^s}\le M^s(\omega_n^*;\s^1)= M^s_n(\s^1),
\end{equation}
with equality if and only if $(z_1,\ldots, z_n)$ consists of distinct equally spaced points.
\end{corollary}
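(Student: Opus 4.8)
The plan is to deduce Corollary \ref{cor1} directly from Theorem \ref{polarS1Thm} by checking that $f_s$ satisfies its hypotheses (in fact with strict convexity) and by first reconciling the two notations. For $z,w\in\s^1$ with $d(z,w)=\theta\in[0,\pi]$ the chord length is $|z-w|=|e^{i\theta}-1|=2\sin(\theta/2)$, so $|z-z_k|^{-s}=(2\sin(d(z,z_k)/2))^{-s}=f_s(d(z,z_k))$; summing over $k$ gives $U^s(\omega_n;z)=U^{f_s}(\omega_n;z)$, and hence $M^s(\omega_n;\s^1)=M^{f_s}(\omega_n;\s^1)$ and $M^s_n(\s^1)=M^{f_s}_n(\s^1)$. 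Thus the assertions of the corollary are exactly those of Theorem \ref{polarS1Thm} specialized to $f=f_s$, once the hypotheses are verified.

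First I would check that $f_s\colon[0,\pi]\to[0,\infty]$ is non-increasing: writing $u=\theta/2\in[0,\pi/2]$, the map $u\mapsto\sin u$ is increasing on $[0,\pi/2]$, so $\theta\mapsto(2\sin(\theta/2))^{-s}$ is non-increasing on $[0,\pi]$; moreover $f_s(\theta)\to\infty$ as $\theta\to0^+$, so $f_s(0)=\infty=\lim_{\theta\to0^+}f_s(\theta)$, matching the normalization required in the theorem. Next I would verify strict convexity on $(0,\pi]$ by a direct second-derivative computation: with $u=\theta/2$,
\[ \frac{d^2}{d\theta^2}(2\sin u)^{-s}=\frac{2^{-s}\,s}{4}\,(\sin u)^{-s-2}\bigl(1+s\cos^2 u\bigr)>0\qquad(u\in(0,\pi/2]), \]
so $f_s$ is strictly convex on $(0,\pi]$; it is automatically continuous as an extended-real-valued function on $[0,\pi]$, as noted for general such $f$ in the introduction.

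With the hypotheses confirmed — indeed with the stronger strict-convexity hypothesis — Theorem \ref{polarS1Thm} applies to $f=f_s$ and yields $M^{f_s}(\omega_n^*;\s^1)=M^{f_s}_n(\s^1)$, i.e. $M^s(\omega_n^*;\s^1)=M^s_n(\s^1)$. Hence for any $(z_1,\dots,z_n)\in(\s^1)^n$,
\[ \min_{z\in\s^1}\sum_{k=1}^n|z-z_k|^{-s}=M^{f_s}(\omega_n;\s^1)\le M^{f_s}_n(\s^1)=M^s(\omega_n^*;\s^1)=M^s_n(\s^1), \]
the middle inequality being the definition of the supremum; and since $f_s$ is strictly convex, the uniqueness clause of Theorem \ref{polarS1Thm} shows that equality holds precisely when $(z_1,\dots,z_n)$ consists of $n$ distinct equally spaced points. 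The only genuine computation is the convexity check above; there is no substantial obstacle here, the entire content being carried by Theorem \ref{polarS1Thm}.
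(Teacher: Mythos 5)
Your proposal is correct and follows exactly the route the paper takes: the corollary is obtained by specializing Theorem~\ref{polarS1Thm} to $f=f_s$, which the paper treats as immediate and you simply spell out (the identification $|z-z_k|^{-s}=f_s(d(z,z_k))$, the monotonicity, and the strict-convexity computation, which is accurate). Nothing further is needed.
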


The
following representation of $M^s(\omega_n^*;\s^1)$ in terms of {\em Riesz $s$-energy} was observed in \cite {ABE}:
$$M^s(\omega_n^*;\s^1) = \frac{\mathcal{E}_s(\s^1;2n)}{2n} -
\frac{\mathcal{E}_s(\s^1;n)}{n},$$
where
$$\mathcal{E}_s(\s^1;n):=\inf_{\omega_n\in (\s^1)^n} \sum_{j=1}^n\sum_{\substack{k=1\\ k\neq j}}^n\frac{1}{|z_j-z_k|^s}.
$$
Thus, applying the asymptotic formulas for $\mathcal{E}_s(\s^1;n)$ given in \cite {BHS},  we obtain the dominant term of $ M^s_n(\s^1)$ as $n\to \infty$:
$$
M^s_n(\s^1)   \sim
\begin{cases}
 \displaystyle{\frac{2 \zeta(s)}{(2\pi)^s} \, (2^s - 1)n^s} \,, \quad
\enskip s> 1\,, \\ \\
\displaystyle { ({1}/{\pi}) \, n\log n}\,, \quad   \enskip s = 1\,, \\ \\
\displaystyle{\frac{2^{-s}}
{\sqrt{\pi}} \, \frac{\Gamma \big( \frac{1-s}{2}\big)}{\Gamma \big(1-\frac{s}{2} \big)}}\, n\,,
\quad \enskip s \in [0,1),
\end{cases}
  $$
where $\zeta(s)$ denotes the classical Riemann zeta function and  $a_n \sim b_n$ means that $\lim_{n \rightarrow \infty}{a_n/b_n} = 1$.  These asymptotics, but for $M^s (\omega_n^*;\s^1)$, were stated in \cite {ABE}\footnote{We remark that there is a
  factor of $2/(2\pi)^p$ missing in the asymptotics given in \cite{ABE} for the case $p:=s>1$.}.

 For $s$ an even
integer, say $s=2m$, the precise value of $M^{2m}_n(\s^1)=M^{2m}(\omega_n^*;\s^1)$ can be expressed in finite
terms, as can be seen from formula (1.20) in \cite {BHS}.

\begin{corollary} We have
$$M_n^{2m}(\s^1)= \frac{2}{(2\pi)^{2m}}
\sum_{k=1}^m n^{2k}\zeta(2k)\alpha_{m-k}(2m)(2^{2k}-1),\quad m \in \mathbb{N}, $$
where $\alpha_j(s)$ is defined via the power series for
$\text{\rm sinc\,} z = (\sin \pi z)/(\pi z):$
$$(\text{\rm sinc\,} z)^{-s} = \sum_{j=0}^\infty \alpha_j(s) z^{2j}\,,
\quad  \alpha_0(s)=1\,.$$
In particular,
\begin{equation*}
\begin{split}
M^2_n(\s^1)&=\frac{2}{(2\pi)^2}n^2\zeta(2)=\frac{n^2}{4},\\
 M_n^4(\s^1)  &=
\frac{2}{(2\pi)^4}[n^2\zeta(2)\alpha_1(4)(2^2-1)+n^4\zeta(4)(2^4-1)]=  \frac{n^2}{24}+\frac{n^4}{48} ,\\
 M_n^6(\s^1)  &=\frac{2}{(2\pi)^6}[n^2\zeta(2)\alpha_2(6)(2^2-1)+n^4\zeta(4)\alpha_1(6)(2^4-1)+n^6\zeta(6)(2^6-1)] \\ 
 & =\frac {n^2} {120} + \frac {n^4} {192} + \frac {n^6} {480},
 \end{split}
\end{equation*}
\end{corollary}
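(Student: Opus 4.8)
The plan is to reduce the evaluation to the energy of equally spaced points, for which an exact (not merely asymptotic) formula is available when the exponent is even. By Theorem~\ref{polarS1Thm} we have $M_n^{2m}(\s^1)=M^{2m}(\omega_n^*;\s^1)$, and by the Riesz-energy representation recorded above (from \cite{ABE}),
\[
M^{2m}(\omega_n^*;\s^1)=\frac{\mathcal{E}_{2m}(\s^1;2n)}{2n}-\frac{\mathcal{E}_{2m}(\s^1;n)}{n}.
\]
Putting $z_j=e^{2\pi ij/N}$ and using rotation invariance, the energy per point of $N$ equally spaced points is
\[
g(N):=\frac{\mathcal{E}_{2m}(\s^1;N)}{N}=\sum_{k=1}^{N-1}\frac{1}{|1-e^{2\pi ik/N}|^{2m}}=\sum_{k=1}^{N-1}\Bigl(2\sin\frac{\pi k}{N}\Bigr)^{-2m},
\]
so that $M_n^{2m}(\s^1)=g(2n)-g(n)$. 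It thus suffices to evaluate $g(N)$ in closed form.

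The key step is to show that for even exponent $2m$ the function $g$ is a polynomial in $N^2$,
\[
g(N)=\frac{2}{(2\pi)^{2m}}\sum_{k=1}^m\zeta(2k)\,\alpha_{m-k}(2m)\,N^{2k}+c_m,
\]
where $c_m$ depends only on $m$; this is formula (1.20) of \cite{BHS}. I would obtain it by expanding each summand near the two indices at which $\sin(\pi k/N)$ is small. From $2\sin(\pi t)=(2\pi t)\operatorname{sinc}t$ and the defining series $(\operatorname{sinc}t)^{-2m}=\sum_{j\ge0}\alpha_j(2m)t^{2j}$, near $k/N\to0^+$ one gets
\[
\Bigl(2\sin\frac{\pi k}{N}\Bigr)^{-2m}=\frac{N^{2m}}{(2\pi)^{2m}}\sum_{j\ge0}\frac{\alpha_j(2m)}{N^{2j}}\,k^{2(j-m)},
\]
with an identical contribution near $k/N\to1^-$ by the reflection $\sin(\pi k/N)=\sin(\pi(N-k)/N)$; this doubling is the origin of the factor $2$. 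For $j<m$ the tails $\sum_{k\ge1}k^{2(j-m)}=\zeta(2(m-j))$ appear, so the coefficient of $N^{2(m-j)}$ equals $\tfrac{2}{(2\pi)^{2m}}\zeta(2(m-j))\alpha_j(2m)$; writing this exponent as $2k$, i.e.\ $k=m-j$, yields the coefficient $\tfrac{2}{(2\pi)^{2m}}\zeta(2k)\alpha_{m-k}(2m)$ of $N^{2k}$. The main obstacle is precisely the rigour here: the sinc series converges only for $|t|<1$, so interchanging the two sums and replacing finite cosecant sums by full zeta values requires an error estimate, and one must check that for even $2m$ every remaining contribution collapses into the single $N$-independent constant $c_m$ (equivalently, that $g$ carries no odd-degree or non-polynomial part). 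This is the technical heart, carried out in \cite{BHS}; as a check, for $m=1$ the classical identity $\sum_{k=1}^{N-1}\csc^2(\pi k/N)=(N^2-1)/3$ gives $g(N)=(N^2-1)/12$, matching the formula with $c_1=-1/12$.

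Granting the closed form, the conclusion follows at once: since $c_m$ is independent of $N$ it cancels in $g(2n)-g(n)$, while $(2n)^{2k}-n^{2k}=(2^{2k}-1)n^{2k}$, whence
\[
M_n^{2m}(\s^1)=g(2n)-g(n)=\frac{2}{(2\pi)^{2m}}\sum_{k=1}^m n^{2k}\zeta(2k)\,\alpha_{m-k}(2m)\,(2^{2k}-1).
\]
The cancellation of $c_m$ is exactly what makes the sum begin at $k=1$.

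For the three explicit cases I would read the coefficients $\alpha_j(2m)$ off the expansion of $(\operatorname{sinc}t)^{-2m}=(\pi t/\sin\pi t)^{2m}$. From $\pi t/\sin\pi t=1+\tfrac{\pi^2}{6}t^2+\tfrac{7\pi^4}{360}t^4+\cdots$, raising to the relevant power and collecting terms gives $\alpha_0(2m)=1$, $\alpha_1(4)=\tfrac{2\pi^2}{3}$, $\alpha_1(6)=\pi^2$, and $\alpha_2(6)=\tfrac{8\pi^4}{15}$. Substituting these together with $\zeta(2)=\pi^2/6$, $\zeta(4)=\pi^4/90$, $\zeta(6)=\pi^6/945$ into the general formula reduces it to the stated values $M_n^2(\s^1)=n^2/4$, $M_n^4(\s^1)=n^2/24+n^4/48$, and $M_n^6(\s^1)=n^2/120+n^4/192+n^6/480$.
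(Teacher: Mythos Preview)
Your proposal is correct and follows essentially the same route as the paper: the paper does not give a detailed argument for this corollary but simply combines Corollary~\ref{cor1} (so $M_n^{2m}(\s^1)=M^{2m}(\omega_n^*;\s^1)$), the Riesz-energy identity $M^{s}(\omega_n^*;\s^1)=\mathcal{E}_s(\s^1;2n)/(2n)-\mathcal{E}_s(\s^1;n)/n$ from \cite{ABE}, and formula~(1.20) of \cite{BHS} for the exact value of the per-point energy at even exponent. You reproduce exactly this chain, additionally supplying a heuristic derivation of the \cite{BHS} formula and the explicit coefficient computations; since you correctly flag the rigorous justification as residing in \cite{BHS}, there is no gap.
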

 The case $s=2$ of the above corollary was first
proved in \cite A,\cite {ABE} and the case $s=4$ was  first proved in \cite {ES}.
We remark that an alternative formula for $\alpha_j(s)$ is
$$\alpha_j(s) = \frac{(-1)^j B_{2j}^{(s)}(s/2)}{(2j)!}(2\pi)^{2j}, \qquad j = 0,1,2,\dots\,,$$
where $B_j^{(\alpha)}(x)$ denotes the generalized Bernoulli polynomial.  Asymptotic formulas for $M^f_n(\s^1)$ for certain other functions $f$ can be obtained from the asymptotic formulas given in 
\cite{BHS2}.

As other consequences of Theorem~\ref{polarS1Thm}, we immediately deduce 
that
equally spaced points are optimal for the following problems
\begin{equation}
\min_{\omega_n\in (\s^1)^n} \max_{z\in \s^1} \sum_{k=1}^n {|z-z_k|^\alpha},\qquad (0<\alpha\le 1),\end{equation}
and
  \begin{equation}
\label{logproblem}\max_{\omega_n\in (\s^1)^n} \min_{z\in \s^1} \sum_{k=1}^n \log \frac{1}{|z-z_k|},\end{equation}   
 with the solution to \eqref{logproblem} being well-known.
Furthermore, various generalizations of the polarization problem for Riesz potentials for configurations on $\s^1$ are worthy of consideration, such as minimizing the potential on circles concentric with $\s^1$.  

%\begin{corollary} For all $n$ large there holds
%$$\frac{1}{2\pi^2} \, (2^s - 1) \zeta(s)n^s + O(n^{s-2})
%\leq M_n^s({\Bbb S}^1) \leq \left[\frac{ns}{\pi(s-1)}\right]^s +O(n), \qquad s>1;$$
%$$\left|M_n^s({\Bbb S}^1) - \frac{1}{\pi}n\log n\right| = O(n),\qquad s=1;$$
%$$n 2^{-s} \frac{ \Gamma \big( \frac{1-s}{2} \big)}
%{\sqrt{\pi}\Gamma \big(1- \frac s2 \big)}+n^s\frac{2}{(2\pi)^s}\zeta(s)(2^s-1)
%\leq M_n^s({\Bbb S}^1) \leq
%n 2^{-s} \frac{ \Gamma \big(\frac{1-s}{2} \big)}{\sqrt{\pi} \, \Gamma \big(1- \frac s2 \big)},
%$$
%for $0<s<1,$ where for the last inequality the upper bound holds for all $n \geq 1.$
%\end{corollary}

\section{Proof of Theorem~\ref{polarS1Thm}}

For distinct points $z_1,z_2\in \s^1$, we let $\subarc{z_1z_2}$ denote the closed subarc of $\s^1$ from $z_1$ to $z_2$ traversed in the counterclockwise direction. We further let $\gamma(\subarc{z_1z_2})$ denote the length of $\subarc{z_1z_2}$ (thus, $\gamma(\subarc{z_1z_2})$ equals either $d(z_1,z_2)$ or $2\pi-d(z_1,z_2)$).   Observe that the points $z_1$ and $z_2$ partition $\s^1$ into two subarcs: $\subarc{z_1z_2}$ and $\subarc{z_2z_1}$.   The following lemma (see proof of Lemma 1 in \cite{ABE}) is a simple consequence of the convexity and monotonicity of the function  $f$
 and is used    to show that any $n$-point configuration $\omega_n\subset \s^1$ such that $M^f(\omega_n;\s^1)=M_n^f(\s^1)$ must have the property that any local minimum of $U^f(\omega_n; \cdot)$ is a global minimum of this function. 
 
   \begin{figure}[htbp]
\begin{center}
\vspace{.15in}

\includegraphics[scale=.7]{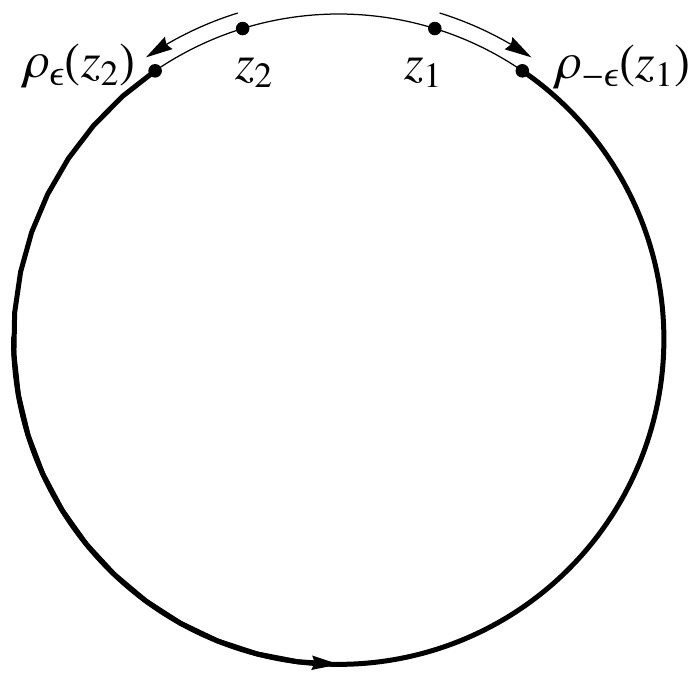}

\vspace{-2.1in}  
$\subarc{z_1 z_2}$

\vspace{1.5in}
$\subarc{\rho_\epsilon(z_2) \rho_{-\epsilon}(z_1)}$

\vspace{.2in}

\caption{The points $z_1, z_2, \rho_{-\epsilon}(z_1), \rho_{\epsilon}(z_2)$ in Lemma~\ref{DeltaLemma}.  The potential increases at every point in the subarc $\subarc{\rho_\epsilon(z_2) \rho_{-\epsilon}(z_1)}$ when $(z_1, z_2) \to (\rho_{-\epsilon}(z_1), \rho_{\epsilon}(z_2))$; see \eqref{DeltaIneq}. }
\label{Fig1}
\end{center}
\end{figure}

  For $\phi\in \R$ and $z\in \s^1$, we let $\rho_\phi(z):=e^{i\phi}z$ denote the counterclockwise rotation of $z$ by the angle $\phi$.

\begin{lemma}[\cite{ABE}] \label{DeltaLemma}    Let $z_1, z_2\in \s^1$ and $0<\epsilon<\gamma(\subarc{z_2z_1})/2$. Then
with $f$ as in Theorem~\ref{polarS1Thm},
\begin{equation} \label{DeltaIneq}
U^f((z_1,z_2);z)\le U^f((\rho_{-\epsilon}(z_1),\rho_\epsilon(z_2));z),
\end{equation}
for $z$ in the subarc  $\subarc{\rho_\epsilon(z_2) \rho_{-\epsilon}(z_1)},$
while the reverse inequality holds for $z$ in the subarc  $\subarc{z_1 z_2}$.    If $f$ is strictly convex on $(0,\pi]$, then these inequalities are strict.  If $z_1=z_2$, then we set $\subarc{z_1 z_2}=\{z_1\}$ and $\subarc{z_2 z_1}=\s^1$.
\end{lemma}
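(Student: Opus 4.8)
The plan is to reduce the two-point statement to a one-point statement by pairing up contributions. Fix $z\in\s^1$ and write $\Delta(z):=U^f((\rho_{-\epsilon}(z_1),\rho_\epsilon(z_2));z)-U^f((z_1,z_2);z)$, so that
\[
\Delta(z)=\bigl[f(d(z,\rho_{-\epsilon}(z_1)))-f(d(z,z_1))\bigr]+\bigl[f(d(z,\rho_\epsilon(z_2)))-f(d(z,z_2))\bigr].
\]
The key observation is that $\rho_{-\epsilon}(z_1)$ is obtained from $z_1$ by moving a fixed arclength $\epsilon$ \emph{away} from $z_2$ along $\subarc{z_2 z_1}$, and symmetrically for $z_2$; thus for $z$ on the ``far'' subarc $\subarc{\rho_\epsilon(z_2)\rho_{-\epsilon}(z_1)}$, both perturbed points move geodesically closer to $z$ (so each bracket is $\ge 0$ by monotonicity of $f$), while for $z$ on $\subarc{z_1z_2}$ both move farther (so each bracket is $\le 0$). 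That already gives the claimed sign of $\Delta(z)$ on each subarc from monotonicity alone. The role of convexity — and the reason the inequality is genuinely useful — is quantitative, but for the statement as phrased one only needs these sign conclusions, so I would first carefully verify the geodesic-distance bookkeeping (which arc $z$ lies on, and whether $d(z,\cdot)$ is increasing or decreasing in the relevant rotation parameter), since $d$ is only piecewise smooth and the antipode of each point is where the monotonicity of $\theta\mapsto d(z,\rho_\theta(w))$ switches.

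Concretely, I would parametrize: put $z_1=e^{i\alpha_1}$, $z_2=e^{i\alpha_2}$ with $0\le\alpha_1<\alpha_2$ chosen so that $\subarc{z_1z_2}$ is the short way (i.e.\ $\gamma(\subarc{z_1z_2})=\alpha_2-\alpha_1$) or handle both cases uniformly by symmetry. For $z=e^{i\theta}$ on $\subarc{z_1z_2}$, one has $d(z,z_1)=\theta-\alpha_1$ and $d(z,z_2)=\alpha_2-\theta$ (both small), and $d(z,\rho_{-\epsilon}(z_1))=\theta-\alpha_1+\epsilon$, $d(z,\rho_\epsilon(z_2))=\alpha_2-\theta+\epsilon$, provided $\epsilon<\gamma(\subarc{z_2z_1})/2$ guarantees these stay in $[0,\pi]$ and no antipode is crossed; then each bracket is $f(\cdot+\epsilon)-f(\cdot)\le 0$, strictly if $f$ is strictly monotone, hence in particular if $f$ is strictly convex (a strictly convex non-increasing function on $(0,\pi]$ is strictly decreasing except possibly it is constant on a terminal interval — I should check $f$ cannot be eventually constant and still strictly convex, which is immediate). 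On the complementary arc $\subarc{\rho_\epsilon(z_2)\rho_{-\epsilon}(z_1)}$ the analogous computation gives $d(z,\rho_{-\epsilon}(z_1))=d(z,z_1)-\epsilon$ and $d(z,\rho_\epsilon(z_2))=d(z,z_2)-\epsilon$ (the geodesic distances now measured the other way around the circle), so each bracket is $f(\cdot-\epsilon)-f(\cdot)\ge 0$. The degenerate case $z_1=z_2$ is handled by the stated convention: then $\subarc{z_1z_2}=\{z_1\}$ contributes nothing and the inequality $U^f((z_1,z_1);z)\le U^f((\rho_{-\epsilon}(z_1),\rho_\epsilon(z_1));z)$ on all of $\s^1\setminus\{z_1\}$ follows because spreading two coincident points symmetrically by $\epsilon$ moves each of them farther from every $z$ whose geodesic distance to $z_1$ is less than $\pi-\epsilon$ and closer for the rest — wait, here I must be careful, and this is exactly the spot where the naive monotonicity argument is delicate.

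The main obstacle, and the only subtle point, is therefore the antipodal bookkeeping: the function $\theta\mapsto d(e^{i\theta}, w)$ is \emph{not} monotone across the whole circle, it increases then decreases, with the turning point at the antipode of $w$. The hypothesis $0<\epsilon<\gamma(\subarc{z_2z_1})/2$ is precisely what keeps the four points $z_1,z_2,\rho_{-\epsilon}(z_1),\rho_\epsilon(z_2)$ in the cyclic order depicted in Figure~\ref{Fig1} and prevents $\rho_{-\epsilon}(z_1)$ and $\rho_\epsilon(z_2)$ from crossing each other or either original point. I would state explicitly the case analysis according to whether the antipode of $z_1$ (resp.\ $z_2$) lies in $\subarc{z_1z_2}$ or in the far arc, and check that in each case the distance increments are the clean $\pm\epsilon$ claimed above, shrinking or growing monotonically as $z$ ranges over the interior of the appropriate subarc; at the shared endpoints of the two arcs the brackets vanish, consistent with equality there. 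Once this is pinned down, the lemma is just the sum of two one-variable inequalities $f(t+\epsilon)\le f(t)$ or $f(t-\epsilon)\ge f(t)$, with strictness inherited from strict convexity via strict monotonicity of $f$.
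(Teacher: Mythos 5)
There is a genuine gap here, and it is exactly at the point you flagged but then waved away. Your central claim --- that on each of the two subarcs \emph{both} brackets have the right sign by monotonicity alone, because the distance increments are ``the clean $\pm\epsilon$'' --- is false once $z$ lies beyond the antipode of one of the two points, and the hypothesis $0<\epsilon<\gamma(\subarc{z_2z_1})/2$ does nothing to rule that out: it only prevents the four points from crossing one another. Concretely, take $z_1=e^{i0}$, $z_2=e^{i\pi/4}$ (so $\gamma(\subarc{z_2z_1})=7\pi/4$), $\epsilon=0.1$, and $z=e^{i(\pi-0.2)}$, which lies in the far arc $\subarc{\rho_\epsilon(z_2)\rho_{-\epsilon}(z_1)}$. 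Then $d(z,z_1)=\pi-0.2$ but $d(z,\rho_{-\epsilon}(z_1))=\pi-0.1$: the geodesic distance to the moved point \emph{increases}, so the bracket $f(d(z,\rho_{-\epsilon}(z_1)))-f(d(z,z_1))$ is $\le 0$, the wrong sign for the far-arc inequality. Worse, the lemma is genuinely false without convexity: with $f\equiv 1$ on $[0,\pi-0.15]$ and linear down to $0$ on $[\pi-0.15,\pi]$ (non-increasing, not convex), the $z_2$-bracket in this example is $0$ and the $z_1$-bracket equals $-1/3$, so $\Delta(z)<0$ on the far arc. Hence no amount of careful bookkeeping can rescue a monotonicity-only argument; convexity must enter the proof of the inequality itself, not merely the strictness claim.

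The repair is to pair the two brackets instead of signing them separately. Writing $a$ for the counterclockwise arclength from $z$ to $z_1$ and $b$ for that from $z_2$ to $z$ (so $a+b=\gamma(\subarc{z_2z_1})\le 2\pi$ on the far arc), the only problematic case is $a>\pi$: there $d(z,\rho_\epsilon(z_2))=d(z,z_2)-\epsilon$ still decreases by exactly $\epsilon$ from $b\le\pi$, while $d(z,\rho_{-\epsilon}(z_1))$ increases by some $\delta\in(0,\epsilon]$ from $d(z,z_1)=2\pi-a$. Since $a+b\le 2\pi$ gives $b-\epsilon\le 2\pi-a$, and since for convex non-increasing $f$ the decrement $f(t)-f(t+\epsilon)$ is nonnegative and non-increasing in $t$, one gets $f(b-\epsilon)-f(b)\ge f(2\pi-a)-f(2\pi-a+\epsilon)\ge f(2\pi-a)-f(2\pi-a+\delta)$, so the gain from the $z_2$-term dominates the loss from the $z_1$-term; the near arc is handled symmetrically. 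This is why the paper describes the lemma as a consequence of ``the convexity and monotonicity'' of $f$ and refers to Lemma 1 of \cite{ABE} for the details --- the paper itself does not reprove it --- and your write-up, as it stands, never invokes convexity where it is actually indispensable.
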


We now assume that  $\omega_n=(z_1 ,\ldots, z_n)$ is ordered in a counterclockwise manner and also that the indexing is extended periodically so that $z_{k+n}=z_k$ for $k\in\Z$.  For $1\le k\le n$ and $\Delta\in \R$, we define  $\tau_{k,\Delta}:(\s^1)^n\rightarrow(\s^1)^n$ by
$$\tau_{k,\Delta}(z_1,\ldots, z_k,z_{k+1},\ldots, z_n):=(z_1,\ldots, \rho_{-\Delta}(z_k),\rho_{\Delta}(z_{k+1}),\ldots, z_n).$$
If $z_{k-1}\neq z_k$  and $z_{k+1}\neq z_{k+2}$,  then $\tau_{k,\Delta}(\omega_n)$ retains the ordering of $\omega_n$   for $  \Delta $  positive and sufficiently small.
Given $\boldsymbol{\Delta}:=(\Delta_1,\ldots, \Delta_n)^T\in \R^n$, let $\tau_{\boldsymbol{\Delta}}:=\tau_{n,\Delta_n}\circ \cdots \circ \tau_{2,\Delta_2}\circ \tau_{1,\Delta_1}$ and
$\omega_n':=\tau_{\boldsymbol{\Delta}} (\omega_n).$
   Letting  $\alpha_k:=\gamma(\subarc{z_k z_{k+1}})$ and  $\alpha_k':=\gamma(\subarc{z_k' z_{k+1}'})$ for $k=1,\ldots, n$,
  we obtain the system of $n$ linear equations:
  \begin{equation}\label{system}
  \alpha'_k=\alpha_k-\Delta_{k-1}+2\Delta_k-\Delta_{k+1}, \qquad (1\le k\le n),
  \end{equation}
 which is satisfied as long as $\sum_{k=1}^n\alpha_k'=2\pi$    or, equivalently,  if $\omega_n'$ is ordered counterclockwise.  
Let  $$\sep(\omega_n):=\min_{1\le \ell \le n} \alpha_\ell.$$   Then   \eqref{system} holds if 
\begin{equation}\label{DeltaSuffCond}
\max_{1\le k\le n}|\Delta_k|\le (1/4)\sep(\omega_n),
\end{equation} 
in which case, 
  the configurations 
  \begin{equation}\label{omeganell}
  \omega_{n,\boldsymbol{\Delta}}^{(\ell)}:=\tau_{n,\Delta_\ell}\circ \cdots \circ \tau_{2,\Delta_2}\circ \tau_{1,\Delta_1}(\omega_n),\qquad(\ell=1,\ldots,n)
  \end{equation} are all ordered counterclockwise.   If the components of $\boldsymbol{\Delta}$ are nonnegative, then we may replace the `(1/4)' in \eqref{DeltaSuffCond} with `(1/2)'.
  
 \begin{lemma}\label{Delta*}
  Suppose $\omega_n=(z_1 ,\ldots, z_n)$ and $\omega_n'=(z'_1 ,\ldots, z'_n)$ are $n$-point configurations on $\s^1$ ordered in a counterclockwise manner. Then there is a unique $\boldsymbol{\Delta}^*=(\Delta_1^*,\ldots, \Delta_n^*)\in \R^n$ so that
  \begin{itemize}
  \item[(a)] $\Delta_k^*\ge 0$, $k=1,\ldots, n$,
  \item[(b)] $\Delta_{j }^*=0$ for some $j \in\{1,\ldots, n\}$, and
  \item[(c)] $\tau_{\boldsymbol{\Delta^*}}(\omega_n)$ is a rotation of $\omega_n'$.
  \end{itemize}
  \end{lemma}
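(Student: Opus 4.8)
The plan is to translate condition (c) into a linear system for $\boldsymbol{\Delta}^{*}$ and then extract existence and uniqueness from the structure of that system. Write $\boldsymbol{\alpha}=(\alpha_{1},\dots,\alpha_{n})^{T}$ and $\boldsymbol{\alpha}'=(\alpha_{1}',\dots,\alpha_{n}')^{T}$ for the arc-length vectors of $\omega_{n}$ and $\omega_{n}'$, where $\alpha_{k}=\gamma(\subarc{z_{k}z_{k+1}})$ and $\alpha_{k}'=\gamma(\subarc{z_{k}'z_{k+1}'})$; since $\omega_{n}$ and $\omega_{n}'$ are counterclockwise ordered, $\mathbf{1}^{T}\boldsymbol{\alpha}=\mathbf{1}^{T}\boldsymbol{\alpha}'=2\pi$, where $\mathbf{1}:=(1,\dots,1)^{T}\in\R^{n}$. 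A direct unwinding of $\tau_{\boldsymbol{\Delta}}=\tau_{n,\Delta_{n}}\circ\cdots\circ\tau_{1,\Delta_{1}}$ shows that the $k$-th point of $\tau_{\boldsymbol{\Delta}}(\omega_{n})$ is $z_{k}$ rotated by the angle $\Delta_{k-1}-\Delta_{k}$ (indices mod $n$, $\Delta_{0}:=\Delta_{n}$); this underlies \eqref{system}, which states that whenever $\tau_{\boldsymbol{\Delta}}(\omega_{n})$ is counterclockwise ordered its arc-length vector equals $\boldsymbol{\alpha}+L\boldsymbol{\Delta}$, where $L:=2I-P-P^{-1}$ is the Laplacian of the $n$-cycle and $P$ is the cyclic-shift matrix. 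Since $\mathbf{1}^{T}L=0$, the vector $\boldsymbol{\alpha}+L\boldsymbol{\Delta}$ always has coordinate sum $2\pi$; and when its coordinates are all nonnegative, a lifting argument shows $\tau_{\boldsymbol{\Delta}}(\omega_{n})$ is counterclockwise ordered with this as its arc-length vector.

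The key reduction is that $\boldsymbol{\Delta}^{*}$ satisfies (c) exactly when $L\boldsymbol{\Delta}^{*}=\boldsymbol{\alpha}'-\boldsymbol{\alpha}$. This rests on the elementary fact that two counterclockwise-ordered $n$-point lists are rotations of one another (as ordered lists) if and only if they have the same arc-length vector: lift both to nondecreasing angle tuples, whereupon equality of the successive differences telescopes to a constant offset between the tuples. Hence, if $L\boldsymbol{\Delta}^{*}=\boldsymbol{\alpha}'-\boldsymbol{\alpha}$, then $\boldsymbol{\alpha}+L\boldsymbol{\Delta}^{*}=\boldsymbol{\alpha}'\ge 0$, so $\tau_{\boldsymbol{\Delta}^{*}}(\omega_{n})$ is counterclockwise ordered with arc-length vector $\boldsymbol{\alpha}'$ and is therefore a rotation of $\omega_{n}'$; conversely, if $\tau_{\boldsymbol{\Delta}^{*}}(\omega_{n})$ is a rotation of $\omega_{n}'$, then it is counterclockwise ordered, so \eqref{system} gives that its arc-length vector is $\boldsymbol{\alpha}+L\boldsymbol{\Delta}^{*}$, which must then coincide with that of $\omega_{n}'$, namely $\boldsymbol{\alpha}'$.

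It remains to analyze $L\boldsymbol{\Delta}^{*}=\boldsymbol{\alpha}'-\boldsymbol{\alpha}$. The matrix $L$ is symmetric with $\ker L=\mathrm{span}(\mathbf{1})$ (a vector annihilated by $L$ has constant successive differences around the cycle, hence is constant), so $\mathrm{range}\,L=(\ker L)^{\perp}=\mathbf{1}^{\perp}$. Since $\mathbf{1}^{T}(\boldsymbol{\alpha}'-\boldsymbol{\alpha})=2\pi-2\pi=0$, the system is solvable, and its solution set is a single coset $\boldsymbol{\Delta}_{0}+\R\mathbf{1}$. Put $m:=\min_{1\le k\le n}(\boldsymbol{\Delta}_{0})_{k}$ and $\boldsymbol{\Delta}^{*}:=\boldsymbol{\Delta}_{0}-m\mathbf{1}$; then $\boldsymbol{\Delta}^{*}\ge 0$ with a zero coordinate at any index attaining the minimum, so (a) and (b) hold, and (c) holds by the reduction, which proves existence. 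For uniqueness, if $\boldsymbol{\Delta}^{*}$ and $\widetilde{\boldsymbol{\Delta}}^{*}$ both satisfy (a)--(c), then both solve the system, hence $\widetilde{\boldsymbol{\Delta}}^{*}-\boldsymbol{\Delta}^{*}=c\mathbf{1}$ for some $c\in\R$; as each vector is nonnegative with a zero coordinate, $c>0$ would make every coordinate of $\widetilde{\boldsymbol{\Delta}}^{*}$ positive and $c<0$ would make every coordinate of $\boldsymbol{\Delta}^{*}$ positive, so $c=0$.

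The step that needs the most care is the reduction in the second paragraph: one must check that $\tau_{\boldsymbol{\Delta}^{*}}(\omega_{n})$ is genuinely counterclockwise ordered, so that its arc-length vector really equals $\boldsymbol{\alpha}+L\boldsymbol{\Delta}^{*}$ and not that vector reduced modulo $2\pi$ coordinatewise --- equivalently, that the displacement angles $\Delta_{k-1}^{*}-\Delta_{k}^{*}$ wind no point around $\s^{1}$. On the `if' side this is automatic once the system is solved, since then the arc-length vector is $\boldsymbol{\alpha}'$, whose coordinates lie in $[0,2\pi)$; the point should nonetheless be spelled out via the lifting computation of the first paragraph, and the cases in which some consecutive points of $\omega_{n}$ or $\omega_{n}'$ coincide dispatched using the conventions $\subarc{z_{1}z_{2}}=\{z_{1}\}$ and $\subarc{z_{2}z_{1}}=\s^{1}$ from Lemma~\ref{DeltaLemma}.
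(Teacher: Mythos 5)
Your proof is correct and follows essentially the same route as the paper's: you reduce (c) to the cycle-Laplacian system $A\boldsymbol{\Delta}=\boldsymbol{\alpha}'-\boldsymbol{\alpha}$, get solvability from $\ker A=(\mathrm{range}\,A)^{\perp}=\mathrm{span}(\boldsymbol{1})$ and $\boldsymbol{1}^{T}(\boldsymbol{\alpha}'-\boldsymbol{\alpha})=0$, normalize by subtracting the minimum coordinate, and deduce uniqueness from the one-dimensional kernel. The only difference is that you spell out the equivalence of (c) with the linear system (the rotation-iff-equal-arc-lengths fact and the check that the image configuration is genuinely counterclockwise ordered), which the paper disposes of in a single sentence.
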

 \begin{proof}

     The  system   \eqref{system} can be expressed in the form 
     \begin{equation}\label{ADB}A\boldsymbol{\Delta}=\boldsymbol{\beta},
     \end{equation}
 where
$$A:=\begin{pmatrix} 2 & -1& 0 & 0 & \cdots & -1\\-1 & 2 & -1&0& \cdots & 0 \\
\vdots &   &  & &  & \vdots\\
0 & 0 &   \cdots & -1 &2 &-1\\
-1 & 0 &   \cdots &0& -1 &2\end{pmatrix}, \quad \boldsymbol{\Delta}:=\begin{pmatrix} \Delta_1 \\ \Delta_2\\ \vdots \\   \\ \Delta_n \end{pmatrix},\text{ and } \boldsymbol{\beta}:=\begin{pmatrix} \alpha_1'-\alpha_1\\ \alpha_2'-\alpha_2\\ \vdots \\   \\ \alpha_n'-\alpha_n \end{pmatrix}.
$$
It is elementary to verify that   $\ker A=(\text{range }A)^\perp= \text{span } (\boldsymbol{1})$, where $\boldsymbol{1}=(1,1,\ldots,1)^T$.     Since $\boldsymbol{\beta}^T\boldsymbol{1}=\sum_{k=1}^n (\alpha_k'-\alpha_k)=0$, the linear system \eqref{ADB} always has a solution $\boldsymbol{\Delta}$.  Let $j\in\{1, \ldots, n\}$ satisfy $\Delta_{j}= \min_{1\le k\le n} \Delta_k  $.
Then subtracting $\Delta_{j}  \boldsymbol{1}$ from $\boldsymbol{\Delta}$, we obtain the desired $\boldsymbol{\Delta}^*$.  Since $\ker A=\text{span }\boldsymbol{1}$,  there is at most one solution of \eqref{ADB} satisfying properties (a) and (b), showing that $\boldsymbol{\Delta^*}$ is unique.

Part (c) holds as a direct result of the fact that both  $\omega_n$ and $\omega_n'$ are  ordered counterclockwise.

 \end{proof}

\begin{lemma} \label{lemmaell} Let $\Omega_n=(z_1,\ldots, z_n)$ be a configuration of $n$ distinct points on $\s^1$ ordered counterclockwise,  and with $f$ as in Theorem~\ref{polarS1Thm}, 
suppose $\boldsymbol{\Delta}=(\Delta_1,\ldots, \Delta_n)\in \R^n$ is such that   
\begin{itemize}
\item[(a)]  $0\le \Delta_k \le (1/2) {\rm sep}(\Omega_n)$ for $k=1,\ldots, n$, and 
\item[(b)]   there is some $j\in\{1,\ldots, n\}$ for which $\Delta_j=0$.
\end{itemize}
  Let $\Omega_n':=\tau_{\boldsymbol \Delta}(\Omega_n)=(z'_1,\ldots, z'_n)$.  Then 
$\subarc{z'_j z'_{j+1}}\subset \subarc{z_jz_{j+1}}$ and 
\begin{equation}\label{U1}
U^f(\Omega_n;z) \le U^f(\Omega_n';z)  \qquad (z\in \subarc{z_j'z_{j+1}'}).
\end{equation}
If $f$ is strictly convex on $(0,\pi]$ and $\Delta_k>0$ for at least one $k$, then the inequality \eqref{U1} is strict.
\end{lemma}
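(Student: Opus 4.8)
The plan is to exploit the factorization $\tau_{\boldsymbol\Delta}=\tau_{n,\Delta_n}\circ\cdots\circ\tau_{1,\Delta_1}$ into elementary moves, apply Lemma~\ref{DeltaLemma} to each move, and telescope the resulting inequalities over the arc $\subarc{z_j'z_{j+1}'}$. Set $\Omega^{(0)}:=\Omega_n$ and $\Omega^{(\ell)}:=\tau_{\ell,\Delta_\ell}\circ\cdots\circ\tau_{1,\Delta_1}(\Omega_n)$ for $\ell=1,\dots,n$, so that $\Omega^{(n)}=\Omega_n'$; these are the partial configurations $\omega_{n,\boldsymbol\Delta}^{(\ell)}$ of \eqref{omeganell}, and by hypothesis (a) together with the `(1/2)'-form of \eqref{DeltaSuffCond} each $\Omega^{(\ell)}$ is ordered counterclockwise. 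Since $\tau_{\ell,\Delta_\ell}$ changes only the $\ell$th and $(\ell+1)$st coordinates of $\Omega^{(\ell-1)}$, say from $(p,q)$ to $(\rho_{-\Delta_\ell}(p),\rho_{\Delta_\ell}(q))$, one has for every $z\in\s^1$
\[ U^f(\Omega^{(\ell)};z)-U^f(\Omega^{(\ell-1)};z)=U^f((\rho_{-\Delta_\ell}(p),\rho_{\Delta_\ell}(q));z)-U^f((p,q);z), \]
so Lemma~\ref{DeltaLemma} applies verbatim at the $\ell$th step with $\epsilon=\Delta_\ell$ (steps with $\Delta_\ell=0$ contributing nothing; note $p\neq q$, since $\Delta_{\ell-1}\le\tfrac12\sep(\Omega_n)\le\tfrac12\alpha_\ell<\alpha_\ell$).

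Next I would record two bookkeeping facts about where the moves place the points. Because $\Delta_j=0$, the coordinate $z_j$ is displaced only by $\tau_{j-1,\Delta_{j-1}}$ and $z_{j+1}$ only by $\tau_{j+1,\Delta_{j+1}}$, whence $z_j'=e^{i\Delta_{j-1}}z_j$ and $z_{j+1}'=e^{-i\Delta_{j+1}}z_{j+1}$; as $\Delta_{j-1}+\Delta_{j+1}\le\sep(\Omega_n)\le\gamma(\subarc{z_jz_{j+1}})$, the images do not cross and $\subarc{z_j'z_{j+1}'}\subseteq\subarc{z_jz_{j+1}}$. More generally, tracing the composition shows that the $\ell$th coordinate of $\Omega^{(\ell)}$ is already in its final position $z_\ell'$ (no later move touches it), while its $(\ell+1)$st coordinate equals $e^{i\Delta_\ell}z_{\ell+1}=e^{i\Delta_{\ell+1}}z_{\ell+1}'$, which by (a) and the length relations \eqref{system} lies in the closed gap $\subarc{z_{\ell+1}'z_{\ell+2}'}$ of $\Omega_n'$.

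The crux, and what I expect to be the main obstacle, is to verify that $\subarc{z_j'z_{j+1}'}$ lies in the increase region of Lemma~\ref{DeltaLemma} at \emph{every} step, i.e.\ that it misses the open arc on which the two-point potential of the $\ell$th step can decrease, namely the interior of the gap of $\Omega^{(\ell)}$ between its $\ell$th and $(\ell+1)$st coordinates. By the bookkeeping above this open gap runs counterclockwise from $z_\ell'$ to $e^{i\Delta_\ell}z_{\ell+1}\in\subarc{z_{\ell+1}'z_{\ell+2}'}$, hence (when $j\notin\{\ell,\ell+1\}$) is contained in $\subarc{z_\ell'z_{\ell+1}'}\cup\subarc{z_{\ell+1}'z_{\ell+2}'}$ and is therefore disjoint from the closed gap $\subarc{z_j'z_{j+1}'}$ except possibly at its endpoints, which it does not contain. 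The step $\ell=j$ is a trivial move, and for the one remaining value $\ell=j-1$ we use $\Delta_j=0$ again: then $e^{i\Delta_\ell}z_{\ell+1}=e^{i\Delta_{j-1}}z_j=z_j'$, so the open gap of $\Omega^{(j-1)}$ is precisely the interior of $\subarc{z_{j-1}'z_j'}$, still disjoint from $\subarc{z_j'z_{j+1}'}$. It then remains to confirm the separation hypothesis $\Delta_\ell<\tfrac12\gamma(\subarc{qp})$ of Lemma~\ref{DeltaLemma} at each step; for $n\ge3$ this follows from $\Delta_\ell\le\tfrac12\sep(\Omega_n)$ and $\alpha_\ell\le2\pi-(n-1)\sep(\Omega_n)$, while for $n=2$ there is a boundary choice of $\boldsymbol\Delta$ that makes two moved points coincide, in which case the $\ell$th step inequality is read off directly from the monotonicity of $f$.

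Granting all this, Lemma~\ref{DeltaLemma} gives $U^f(\Omega^{(\ell-1)};z)\le U^f(\Omega^{(\ell)};z)$ for all $z\in\subarc{z_j'z_{j+1}'}$ and all $\ell$, and summing over $\ell=1,\dots,n$ yields \eqref{U1}. For the strict assertion, suppose $f$ is strictly convex on $(0,\pi]$ and $\Delta_k>0$ for some $k$; then $k\neq j$, the strict form of Lemma~\ref{DeltaLemma} makes the $k$th of these inequalities strict on the whole (closed) increase arc and hence on $\subarc{z_j'z_{j+1}'}$, and adding the remaining non-strict inequalities gives strict inequality in \eqref{U1}.
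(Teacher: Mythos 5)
Your proof is essentially the paper's own: factor $\tau_{\boldsymbol\Delta}$ into the single moves \eqref{omeganell}, verify that the shrinking arc $\subarc{z_j'z_{j+1}'}$ stays inside the increase region of Lemma~\ref{DeltaLemma} at every step, and telescope, with the strict case following from the one strict step. The one slip is in your bookkeeping at the cyclic wrap-around: when $\ell=1$ and $\Delta_n>0$ the first coordinate of $\Omega^{(1)}$ is $\rho_{-\Delta_n}(z_1')\neq z_1'$ (step $n$ moves it again), so the step-$1$ decrease region can protrude into $\subarc{z_n'z_1'}$ --- harmless only because $j=n$ would force $\Delta_n=0$, and avoidable altogether by comparing the $j$th gap with the gaps of the \emph{intermediate} configuration $\Omega^{(\ell)}$ (whose interiors are pairwise disjoint by the counterclockwise ordering), which is how the paper phrases the containment.
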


We remark that $\Delta_k=0$ for all $k=1,\ldots, n$ is equivalent to saying that the points are equally spaced. 

\begin{proof}
Recalling \eqref{omeganell}, it follows from condition (a) that  $(z^{(\ell)}_1,\ldots, z^{(\ell)}_n):= \omega_{n,\boldsymbol{\Delta}}^{(\ell)}$ are counterclockwise ordered.  Since  $\Delta_j=0$ and $\Delta_k\ge 0$ for $k=1,\ldots,n$,  the points $z^{(\ell)}_j$ and $z^{(\ell)}_{j+1}$ are moved at most once as $\ell$ varies from 1 to $n$ and move toward each other, while  remaining in the complement of all other subarcs  $\subarc{z^{(\ell)}_k z^{(\ell)}_{k+1}}$, i.e., 
 $$\subarc{z'_jz'_{j+1}}=\subarc{z^{(n)}_jz^{(n)}_{j+1}}\subseteq \subarc{z^{(\ell)}_jz^{(\ell)}_{j+1}}\subseteq \subarc{z^{(\ell)}_{k+1}z^{(\ell)}_k},$$ for $k\in\{1,\ldots,n\}\setminus\{j\}$ and $\ell\in\{1,\ldots, n\}$.  
  Lemma~\ref{DeltaLemma} implies that, for $\ell=1,\ldots, n$, we have $U^f(\omega^{(\ell-1)}_n;z) \le U^f(\omega^{(\ell)}_n;z)$ for $z\in \subarc{z^{(\ell)}_j z^{(\ell)}_{j+1}}$ (where  $\omega^{(0)}_n:=\omega_n$) and the inequality is strict if $\Delta_\ell>0$.  Hence,   \eqref{U1} holds and the inequality is strict if $f$ is strictly convex and $\Delta_k>0$ for some $k=1,\ldots, n$. 
 \end{proof}
 
We now proceed with the proof of Theorem~\ref{polarS1Thm}.  Let $\omega_n=(z_1,\ldots, z_n)$ be a non-equally spaced configuration of $n$  (not necessarily distinct) points on $\s^1$ ordered counterclockwise.  By Lemma~\ref{Delta*}, there is some equally spaced configuration $\omega_n'$ (i.e., $\alpha'_k=2\pi/n$ for $k=1,\ldots, n$) and some $\boldsymbol{\Delta^*}=(\Delta_1^*,\ldots, \Delta_n^*)$ such that (a) $\omega_n'=\tau_{\boldsymbol{\Delta^*}}(\omega_n)$, (b) $\Delta^*_k\ge 0$ for $k=1,\ldots,n$, and (c) $\Delta^*_j=0$ for some $j\in\{1,\ldots,n\}$.      Then \eqref{system} holds with $\alpha_k:=\gamma(\subarc{z_k,z_{k+1}})$ and $\alpha'_k:=2\pi/n$.    Since $\omega_n$ is not equally spaced, we have  $\Delta^*_k>0$ for at least one value of $k$. 

For $0\le t\le 1$, let $\omega_n^t:=\tau_{(t\boldsymbol{\Delta^*})}(\omega_n)=(z_1^t,\ldots, z_n^t)$ and, for $k=1,\ldots, n$, let  $\alpha_k^t:=\gamma(\subarc{z_k^t z_{k+1}^t})$. Recalling \eqref{system},   observe that
\begin{align*}\alpha_k^t&=\alpha_k-t(\Delta_{k-1}+2\Delta_k-\Delta_{k+1})\\ &=\alpha_k+t(2\pi/n-\alpha_k)\\ &=(1-t)\alpha_k+t(2\pi/n),
\end{align*} for $0\le t\le 1$ and $k=1,\ldots, n$, and so
$\sep(\omega_n^t)\ge t(2\pi/n)$.      Now let  $0<t<s<\min(1,t(1+\pi/(nD)))$, where $D:=\max \{\Delta_k : 1\le k\le n\}$.  Then Lemma~\ref{lemmaell} (with $\Omega_n=\omega_n^t$, $\boldsymbol{\Delta}=(s-t)\boldsymbol{\Delta^*}$, and $\Omega_n'=\tau_{\boldsymbol{\Delta}}(\Omega_n)=\omega_n^s$) implies that 
$\subarc{z_j^s z_{j+1}^s}\subseteq\subarc{z_j^t z_{j+1}^t}$ and that
\begin{equation}\label{U2}
U^f(\omega_n^t;z) \le U^f(\omega_n^s;z)  \qquad (z\in \subarc{z_j^sz_{j+1}^s}),
\end{equation}
where the inequality is sharp if $f$ is strictly convex.

Consider  the function  $$h(t):=\min\  \{U^f(\omega_n^t;z):  z\in\subarc{z_j^t z_{j+1}^t}\}, \qquad (0\le t\le 1).$$  Observe that $$h(t)\le \min\ \{U^f(\omega_n^t;z): {z\in\subarc{z_j^s z_{j+1}^s}}\}\le \min \{U^f(\omega_n^s;z): {z\in\subarc{z_j^s z_{j+1}^s}}\}= h(s),$$ for  $0<t<s<\min(1,t(1+\pi/(nD)))$.  It is then easy to verify that  $h$ is  non-decreasing  on $(0,1)$. Since $\omega_n^t$ depends continuously on $t$,  the function $h$ is continuous on $[0,1]$ and thus $h$ is   non-decreasing on $[0,1]$.  

We then obtain the desired inequality
 $$M^f(\omega_n;\s^1)\le h(0) \le h(1)=M^f(\omega_n';\s^1),$$  where the last equality is a consequence of the fact that  $\omega_n'$ is an equally spaced configuration and so the minimum of $U^f(\omega_n';z)$ over $\s^1$ is the same as the minimum over $\subarc{z_j'z_{j+1}'}$.   If $f$ is strictly convex, then $h(0)<h(1)$ showing that any optimal   $f$-polarization configuration must be equally spaced.  This completes the proof of Theorem~\ref{polarS1Thm}.  \hfill $\Box$
 
 \bigskip
 
\noindent{\bf Acknowledgements:} We thank the referees for their helpful suggestions  to improve the manuscript.

\end{document}